\newtheorem{Theorem}{Theorem}[section]
\newtheorem{Proposition}[Theorem]{Proposition}
\newtheorem{Corollary}[Theorem]{Corollary}
\theoremstyle{definition}
\numberwithin{equation}{section}
\newcommand{\Cset}{\mathbb{C}}
\newcommand{\Nset}{\mathbb{N}}
\newcommand{\Vset}{\mathbb{V}}
\newcommand{\Vseto}{{\mathbb{V}}^{\text{o}}}
\newcommand{\Vsete}{{\mathbb{V}}^{\text{e}}}
\newcommand{\bfx}{\mathbf{x}}
\newcommand{\bft}{\mathbf{\theta}}
\newcommand{\la}{{\lambda}}
\newcommand{\pd}{\partial}
\newcommand{\cE}{{\mathcal E}}
\newcommand{\cO}{{\mathcal O}}
\newcommand{\osp}{\mathfrak{osp}(1|2)}
\begin{document}

\title{Superspace realizations of the Bannai--Ito algebra}

\author[N.~Cramp\'e]{Nicolas Cramp\'e}
\address{NC, Institut Denis-Poisson CNRS/UMR 7013 - Universit\'e de Tours - Universit\'e d'Orl\'eans,
Parc de Grandmont, 37200 Tours, France}
\email{crampe1977@gmail.com}

\author[H.~De~Bie]{Hendrik~De~Bie}
\address{HDB, Clifford research group, Department of electronics and information systems, Faculty of engineering and architecture, Ghent university,
Krijgslaan 281-S8, 9000, Gent (Belgium)}
\email{Hendrik.DeBie@UGent.be}

\author[P.~Iliev]{Plamen~Iliev}
\address{PI, School of Mathematics, Georgia Institute of Technology,
Atlanta, GA 30332--0160, USA}
\email{iliev@math.gatech.edu}

\author[L.~Vinet]{Luc Vinet}
\address{LV, IVADO and Centre de Recherches Math\'ematiques, Universit\'e de Montr\'eal, P.O. Box 6128, Centre-ville Station, Montr\'eal, QC H3C 3J7, Canada}
\email{vinet@crm.umontreal.ca}

\date{\today}

\maketitle

\begin{abstract}
    A model of the Bannai--Ito algebra in a superspace is introduced. It is obtained from the three-fold tensor product of the basic realization of the Lie superalgebra  $\osp$ in terms of operators in one continuous and one Grassmanian variable. The basis vectors of the resulting Bannai--Ito algebra module involve Jacobi polynomials.
\end{abstract}

\section{Introduction}
This paper offers a realization of the Bannai--Ito algebra in superspace. Models of algebras with operators acting on spaces of functions are especially relevant in mathematical physics and allow also to enrich the understanding of special functions \cite{miller1968lie}. This is the basic motivation here.
The Bannai-Ito algebra \cite{de2015bannai} encodes the bispectral properties of the eponymous polynomials which were first introduced in the context of P- and Q- polynomial association schemes \cite{bannai2021algebraic}. Their characterization as eigenfunctions of shift operators of Dunkl type was given in \cite{tsujimoto2012dunkl}. 

The Bannai--Ito polynomials have explicit expressions as combinations of two Racah polynomials \cite{koekoek2010hypergeometric}. Moreover, the Racah algebra \cite{genest2014racah} that is abstracted from the relations verified by the difference and recurrence operators corresponding to these last polynomials can be embedded in the Bannai--Ito algebra \cite{genest2015embeddings} much like the even Lie algebra part of a superalgebra can be obtained from quadratic expressions in the odd generators.

As a matter of fact, it was also observed that the Racah algebra can be directly embedded in the universal enveloping algebra $\mathcal{U}(\mathfrak{sl}(2))$ of $\mathfrak{sl}(2)$ \cite{gao2013classification}. Calling upon the Bargmann realization of $\mathfrak{sl}(2)$ in terms of differential operators in one variable \cite{perelomov2012generalized}, this last embedding thus immediately provides a realization of the Racah algebra in terms of differential operators of the hypergeometric type.

Given that the Racah polynomials arise in the $6-j$ recoupling coefficients of the $\mathfrak{sl}(2)$ representations, it is not surprising that a centrally extended version of the Racah algebra arises by taking as generators the intermediate Casimir in a three-fold product of representations \cite{genest2014superintegrability}. Here, the Bargmann realization produces a model of the Racah algebra in terms of differential operators in three variables. The dimensional reduction of this three-variable realization through separation of variables has been examined in 
\cite{genest2013equitable} and shown to yield the one-variable model of the Racah algebra mentioned before. It has been shown in this spirit how the higher rank Racah algebras \cite{de2017higher} constructed in this fashion using the $n$ variables associated to an n-fold product of irreducible $\mathfrak{sl}(2)$ representations can be reduced to a model involving differential operators in only $n-2$ variables \cite{de2019bargmann}. 

The Bannai-Ito algebra can also be viewed as the commutant of a diagonal embedding of an algebra module in a three-fold tensor product of representations. In this case the underlying algebraic structure is the Lie superalgebra $\mathfrak{osp}(1|2)$ and it is hence understood that the Bannai--Ito polynomials are essentially the Racah coefficients of this superalgebra \cite{genest2014bannai}. Now one extension to $\mathfrak{osp}(1|2)$ of the Bargmann model of $\mathfrak{sl}(2)$ involves thickening the one-dimensional space by the addition of a real Grassmann variable; this leads to using operators acting on functions defined on the resulting superspace \cite{dewitt1992supermanifolds}. Considering the tensor product of three irreducible (discrete series) representations of $\osp$ \cite{frappat1996dictionary} in this picture therefore brings three continuous variables and three Grasmannian ones. In view of the corresponding studies of the Racah realizations mentioned before, natural questions we address in this paper are: What is the Bannai--Ito algebra model that the irreducible decomposition of this tensor product representation brings? To how many variables does this reduction yield? What are the special functions involved? 

The remainder of the paper is organized as follows. The fundamental superspace model of $\osp$ is recalled in Section 2. The construction of the Bannai--Ito algebra generated by the intermediate Casimir operators of three copies of this $\osp$ realization is described in Section 3. This module decomposes into an even and an odd part each with four components. The dimensional reduction that occurs upon requesting that the representation space be bounded from below is discussed in Section 4. It will be shown that the kernel of the total lowering operator is such that its even and odd part only have two components. The action of the intermediate Casimir operators in each of these 4 parts will be given in Section 5. The construction of the irreducible representations is carried out by considering in turn the even subspace in Section 6 and the odd one in Section 7. In each case, first, the concrete representation basis will be obtained by diagonalizing the total Casimir operator and one of the intermediate Casimir elements and second, the tridiagonal action of the other independent intermediate Casimir element will be computed. The Jacobi polynomials will be seen to appear in the expression of these basis elements. The concluding section will offer summary, remarks and outlook.

\section{The fundamental realization of $\osp$ on superspace}
Consider the operators
\begin{align*}
A_{-} &=\theta\pd_{x}+\pd_{\theta},\\
A_{0} &=2x\pd_{x}+\theta\pd_{\theta}+2\nu,\\
A_{+} &=x\theta\pd_{x}+x\pd_{\theta}+2\nu\theta,
\end{align*}
where $\theta$ is a Grassmann variable $\theta^2=0$ commuting with $x$. It is easy to check that the operators satisfy the following relations
$$[A_0,A_{\pm}]=\pm A_{\pm} \qquad\text{ and}\qquad \{A_+,A_-\}=A_0,$$
where as usual $[a,b]=ab-ab$ denotes the commutator of the operators $a$ and $b$, and $\{a,b\}=ab+ba$ denotes the anticommutator. 
Therefore, the operators $\hat{A}_{\pm}=\sqrt{2}A_{\pm}$, $\hat{A}_{0}=A_{0}$ define a representation of the superalgebra $\osp$. Moreover
$$A_-^2=\pd_x.$$
Let $P$ be the operator acting on functions $f(x,\theta)$ by
$$P(f(x,\theta))=f(x,-\theta).$$
Clearly, $P$ is an involution, i.e. $P^2=1$ and 
$$[P,A_0]=0, \qquad \{P,A_{\pm}\}=0.$$
This operator $P$ is realized as follows $$P=1-2\theta \pd_{\theta}.$$
The Casimir operator 
$$Q=(A_0-2A_{+}A_{-}-1/2)P$$
commutes with $A_0$, $A_{\pm}$ and $P$. One can also easily check that in our current realization it holds that $$Q=2\nu-1/2.$$

\section{Three copies and the Bannai--Ito algebra}
For $j=1,2,3$, let
\begin{align*}
A^{(j)}_{-} &=\theta_j\pd_{x_j}+\pd_{\theta_j},\\
A^{(j)}_{0} &=2x_j\pd_{x_j}+\theta_j\pd_{\theta_j}+2\nu_j,\\
A^{(j)}_{+} &=x_j\theta_j\pd_{x_j}+x_j\pd_{\theta_j}+2\nu_j\theta_j,
\end{align*}
where $\{\theta_j\}$ are anticommuting, i.e.
$$\theta_i\theta_j=-\theta_j\theta_i.$$
Note that for $i\neq j$ we have
\begin{equation*}
[A^{(i)}_0, A^{(j)}_0]=0, \qquad [A^{(i)}_0, A^{(j)}_{\pm}]=0, \qquad \{A^{(i)}_{\pm}, A^{(j)}_{\pm}\}=0,\qquad \{A^{(i)}_{\pm}, A^{(j)}_{\mp}\}=0.
\end{equation*}
In particular, these commutativity relations imply that for every nonempty $S\subset\{1,2,3\}$ we obtain another representation of $\osp$ with operators defined by 
\begin{align*}
A^{(S)}_{0} =\sum_{i\in S} A^{(i)}_{0},\qquad A^{(S)}_{\pm} =\sum_{i\in S} A^{(i)}_{\pm}, \qquad P^{(S)}=\prod_{i\in S} P^{(i)},
\end{align*}
and associated Casimir operator
$$Q^{(S)}=(A^{(S)}_0-2A^{(S)}_{+}A^{(S)}_{-}-1/2)P^{(S)}.$$
Note that $Q^{(S)}$,  usually called intermediate Casimir operator, commutes with $A^{(T)}_0$, $A^{(T)}_{\pm}$ and $P^{(T)}$ for every $T\supset S$ and $Q^{(i)}=2\nu_i-1/2$.

The Casimir operators $Q^{(S)}$ provide a realisation of the Bannai--Ito algebra \textit{i.e.} they satisfy the following relations 
\begin{eqnarray}
{} \{Q^{(12)}, Q^{(23)} \}= Q^{(13)}  + 2Q^{(1)}Q^{(3)} + 2Q^{(2)}Q^{(123)}\,,\\ 
{} \{Q^{(12)}, Q^{(13)} \} = Q^{(23)}+ 2Q^{(2)}Q^{(3)} + 2Q^{(1)}Q^{(123)}\,,\\
{} \{Q^{(13)}, Q^{(23)} \} =Q^{(12)}+ 2Q^{(1)}Q^{(2)} + 2Q^{(3)}Q^{(123)}\,.
\end{eqnarray}  
In this realisation of the Bannai--Ito algebra, the following relation also holds
$$
( Q^{(12)})^2 +( Q^{(13)})^2 +( Q^{(23)})^2 +\frac{1}{4}=( Q^{(123)})^2 +( Q^{(1)})^2 +( Q^{(2)})^2 +( Q^{(3)})^2 \,.
$$

\section{Dimensional reduction}
Below we work with the superspace 
$$\Vset=\Cset[x_1,x_2,x_3]\langle \theta_1,\theta_2,\theta_3\rangle.$$
We can decompose $\Vset$ as the direct sum of the odd subspace $\Vseto$  and the even subspace $\Vsete$ as follows
\begin{equation}
\Vset=\Vseto\oplus \Vsete,
\end{equation}
where 
\begin{align*}
\Vseto&=\Cset[x_1,x_2,x_3] \theta_1\oplus \Cset[x_1,x_2,x_3] \theta_2\oplus \Cset[x_1,x_2,x_3] \theta_3\oplus \Cset[x_1,x_2,x_3] \theta_1\theta_2\theta_3\\
\Vsete&=\Cset[x_1,x_2,x_3] \oplus\Cset[x_1,x_2,x_3] \theta_1\theta_2 \oplus \Cset[x_1,x_2,x_3] \theta_1\theta_3 \oplus \Cset[x_1,x_2,x_3] \theta_2\theta_3.
\end{align*}
These subspaces can be characterized as the spaces consisting of skew-symmetric and symmetric functions with respect to the involution $P^{(123)}$ on $\Vset$.
Since $P^{(123)}$ anti commutes with $A^{(123)}_{-}$, we can decompose the $\ker(A^{(123)}_{-})$ on $\Vset$ as a direct sum of odd and even functions
$$\ker(A^{(123)}_{-})= (\ker(A^{(123)}_{-})\cap \Vseto)\oplus (\ker(A^{(123)}_{-})\cap \Vsete).$$
To describe each component, we fix 
$$u=x_1-x_2\qquad\text{ and }\qquad v=x_2-x_3,$$
and we consider the linear transformations 
\begin{align*}
&\cO_{1}, \cO_{2}:\Cset[u,v]\to \Vseto\\
&\cE_{1},\cE_{2}:\Cset[u,v]\to \Vsete
\end{align*}
defined as follows
\begin{align*}
\cO_{1}(h(u,v))&=h(u,v)(\theta_1-\theta_2)+h_{v}(u,v)\theta_1\theta_2\theta_3, \\
\cO_{2}(h(u,v))&=h(u,v)(\theta_2-\theta_3)-h_{u}(u,v)\theta_1\theta_2\theta_3, \\
\cE_{1}(h(u,v))&=h(u,v)(\theta_1\theta_2-\theta_1\theta_3+\theta_2\theta_3), \\
\cE_{2}(h(u,v))&=h(u,v)+h_u(u,v)\theta_1\theta_2+h_v(u,v)\theta_2\theta_3, 
\end{align*}
where $h(u,v)\in\Cset[u,v]$.

\begin{Proposition}\label{pr3.1}
Let $F\in\Vset$. Then $F$ solves the equation
\begin{equation}\label{eq1}
A^{(123)}_{-} F=0,
\end{equation}
if and only if $F$ can be written as
\begin{equation}\label{eq2}
F=F_o+F_e,
\end{equation}
where 
\begin{align}
F_o&=\cO_{1}(h_1(u,v))+\cO_{2}(h_2(u,v))\\
F_e&=\cE_{1}(g_1(u,v))+\cE_{2}(g_2(u,v)),
\end{align}
for some $h_1(u,v),h_2(u,v),g_1(u,v),g_2(u,v)\in\Cset[u,v]$.
\end{Proposition}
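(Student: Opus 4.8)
The plan is to turn the single super-differential equation \eqref{eq1} into a finite system of scalar partial differential equations on the polynomial coefficients of $F$, and then solve that system explicitly. First I would use parity. Each summand $\theta_i\pd_{x_i}+\pd_{\theta_i}$ of $A^{(123)}_-$ is odd for the $\Zset_2$-grading, so $A^{(123)}_-$ interchanges $\Vseto$ and $\Vsete$. Writing $F=F_o+F_e$ as in the decomposition $\Vset=\Vseto\oplus\Vsete$, equation \eqref{eq1} therefore splits into the two independent equations $A^{(123)}_-F_o=0$ and $A^{(123)}_-F_e=0$, which already justifies the shape \eqref{eq2} of the answer.

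The decisive reduction comes from squaring the operator. Using $\{A^{(i)}_-,A^{(j)}_-\}=0$ for $i\neq j$ together with $(A^{(i)}_-)^2=\pd_{x_i}$ from Section 2, one finds
\[
(A^{(123)}_-)^2=\sum_{i=1}^{3}(A^{(i)}_-)^2=\pd_{x_1}+\pd_{x_2}+\pd_{x_3}.
\]
Since $A^{(123)}_-F=0$ forces $(A^{(123)}_-)^2F=0$, and since $\pd_{x_1}+\pd_{x_2}+\pd_{x_3}$ acts coefficientwise on the Grassmann monomials (it commutes with every $\theta_j$ and $\pd_{\theta_j}$), each polynomial coefficient of $F$ must lie in $\ker(\pd_{x_1}+\pd_{x_2}+\pd_{x_3})$. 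In the coordinates $s=x_1+x_2+x_3$, $u=x_1-x_2$, $v=x_2-x_3$ this operator is $3\pd_s$, so its polynomial kernel is exactly $\Cset[u,v]$. This is precisely the mechanism that restricts everything to the two variables $u,v$ and explains why the coefficients in $\cO_i,\cE_i$ depend on $u,v$ alone.

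With every coefficient now a function of $u,v$, I would substitute $\pd_{x_1}=\pd_u$, $\pd_{x_2}=-\pd_u+\pd_v$, $\pd_{x_3}=-\pd_v$, expand $A^{(123)}_-F_e=0$ and $A^{(123)}_-F_o=0$ on the Grassmann basis, and read off the coefficient relations. For the even part this gives four relations among $f_0,f_{12},f_{13},f_{23}$; two of them express $f_{12}$ and $f_{23}$ through $f_0$ and $f_{13}$, and setting $g_2=f_0$, $g_1=-f_{13}$ reproduces $F_e=\cE_1(g_1)+\cE_2(g_2)$. The odd part is analogous: the Grassmann-degree-zero component gives $f_1+f_2+f_3=0$, one $\theta_i\theta_j$-component determines $f_{123}$, and putting $h_1=f_1$, $h_2=-f_3$ yields $F_o=\cO_1(h_1)+\cO_2(h_2)$.

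The converse inclusion is a direct check: substituting the explicit expressions for $\cO_i,\cE_i$ into $A^{(123)}_-$ and using equality of mixed partials shows they are annihilated. The one point that needs care is bijectivity of the parametrizations $(g_1,g_2)\mapsto F_e$ and $(h_1,h_2)\mapsto F_o$. Both systems are overdetermined, so I must verify that the relations left unused in each parity are genuine consequences of the ones solved, i.e.\ that they impose no hidden constraint and eliminate no solution. This reduces to a short cancellation once $f_{12},f_{23}$ (respectively $f_2,f_{123}$) have been eliminated, and confirming this compatibility is the main thing to get right; everything else is routine linear bookkeeping.
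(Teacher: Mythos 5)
Your proposal is correct and follows essentially the same route as the paper: square $A^{(123)}_-$ to get $(\pd_{x_1}+\pd_{x_2}+\pd_{x_3})F=0$, conclude the coefficients depend only on $u,v$, then expand over the Grassmann monomials and equate coefficients (your identifications $g_2=f_0$, $g_1=-f_{13}$, $h_1=f_1$, $h_2=-f_3$ and the compatibility of the remaining relations all check out). The only difference is that you make the parity splitting and the linear elimination explicit, which the paper leaves as a routine verification.
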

\begin{proof} 
Since 
$$(A^{(123)}_{-})^2 F=(\pd_{x_1}+\pd_{x_2}+\pd_{x_3})F=0,$$ 
we see that the $x$ dependence in $F$  is only through the variables $u$ and $v$, i.e. we can write $F$ as
$$F=\sum_{i_1,i_2,i_3=0}^{1}G_{i_1,i_2,i_3}(u,v)\theta_1^{i_1}\theta_2^{i_2}\theta_3^{i_3},$$
where $G_{i_1,i_2,i_3}(u,v)\in \Cset[u,v]$. Substituting this into \eqref{eq1} and equating the coefficients in the different powers of $\theta_1,\theta_2,\theta_3$ shows that $F$ will satisfy \eqref{eq1} if and only if the representation in \eqref{eq2} holds.
\end{proof}

\section{Action of the intermediate Casimir operators on $\ker(A^{(123)}_{-})$}
Since for every nonempty $S\subset\{1,2,3\}$ the Casimir operator $Q^{(S)}$ commutes with $A^{(123)}_{-}$, it follows that $Q^{(S)}$ preserves the space of solutions of equation \eqref{eq1}. 
Below we compute the action of the intermediate Casimir operators $Q^{(12)}$, $Q^{(13)}$ and $Q^{(23)}$ on the basis of solutions of \eqref{eq1} described in Proposition~\ref{pr3.1}.
For $i\neq j$ we set $\nu_{ij}=\nu_i+\nu_j$.

\subsection{Action of $Q^{(12)}$}
We have
\begin{align*}
Q^{(12)}\circ \cO_1=&- \cO_1\circ (2u\pd_u+2\nu_{12}+1/2) -\cO_2\circ(2u\pd_v),\\
Q^{(12)}\circ \cO_2=&\cO_1\circ (2u\pd_u+4\nu_{1}) +\cO_2\circ(2u\pd_u+2\nu_{12}-1/2),\\
Q^{(12)}\circ \cE_1=&- \cE_1\circ (2u\pd_u+2\nu_{12}+1/2) +\cE_2\circ(2u),\\
Q^{(12)}\circ \cE_2=&- \cE_1\circ (2u\pd_u\pd_v+4\nu_{1}\pd_v) +\cE_2\circ(2u\pd_u+2\nu_{12}-1/2).
\end{align*}

\subsection{Action of $Q^{(13)}$}
We have
\begin{align*}
Q^{(13)}\circ \cO_1=&- \cO_1\circ (2\nu_{1}-2\nu_{3}-1/2) -\cO_2\circ(2(u+v)\pd_v+4\nu_{3}),\\
Q^{(13)}\circ \cO_2=&- \cO_1\circ (2(u+v)\pd_u+4\nu_{1}) -\cO_2\circ(2\nu_{1}-2\nu_{3}+1/2),\\
Q^{(13)}\circ \cE_1=&- \cE_1\circ (2\nu_{13}-3/2) -\cE_2\circ(2u+2v),\\
Q^{(13)}\circ \cE_2=&- \cE_1\circ (2(u+v)\pd_u\pd_v+4\nu_{3}\pd_u+4\nu_{1}\pd_v) +\cE_2\circ(2\nu_{13}-1/2).
\end{align*}

\subsection{Action of $Q^{(23)}$}
We have
\begin{align*}
Q^{(23)}\circ \cO_1=&\cO_1\circ (2v\pd_v +2\nu_{23}-1/2) +\cO_2\circ(2v\pd_v+4\nu_{3}),\\
Q^{(23)}\circ \cO_2=&- \cO_1\circ (2v\pd_u) -\cO_2\circ(2v\pd_v +2\nu_{23}+1/2),\\
Q^{(23)}\circ \cE_1=&- \cE_1\circ (2v\pd_v +2\nu_{23}+1/2) +\cE_2\circ(2v),\\
Q^{(23)}\circ \cE_2=&- \cE_1\circ (2v\pd_u\pd_v+4\nu_{3}\pd_u) +\cE_2\circ(2v\pd_v +2\nu_{23}-1/2).
\end{align*}

\section{Diagonalization of $Q^{(123)}$ and $Q^{(12)}$ in the odd subspace}
\subsection{Spectral equations for $Q^{(123)}$}
Consider an odd element 

\begin{equation}\label{5.1}
f(\bfx;\bft)=\cO_1(h(u,v))+\cO_2(g(u,v))\in \ker A^{(123)}_{-},
\end{equation}
 where $\bfx=(x_1,x_2,x_3)$ and $\bft=(\theta_1,\theta_2,\theta_3)$. 
Then the spectral equation
\begin{equation}\label{Q123o}
Q^{(123)}f(\bfx;\bft)=\mu f(\bfx;\bft)
\end{equation}
is equivalent to the equations
\begin{align*}
uh_u+vh_v&=-(\nu_{123}+\mu/2+1/4)h,\\
ug_u+vg_v&=-(\nu_{123}+\mu/2+1/4)g,
\end{align*}
where $\nu_{123}=\nu_1+\nu_2+\nu_3$. The last two equations are satisfied if and only if $h$ and $g$ are homogeneous in $u$ and $v$ of the same degree $N$, where 
$$N=-(\nu_{123}+\mu/2+1/4),$$ 
or equivalently
$$\mu=-(2N+2\nu_{123}+1/2).$$
Since $Q^{(123)}$ commutes with each of the operators $Q^{(12)}$, $Q^{(13)}$, $Q^{(23)}$, we can look at their restrictions on the space of solutions of \eqref{Q123o}.
\subsection{Spectral equations for $Q^{(12)}$}
Consider now the equation
\begin{equation}\label{Q12o}
Q^{(12)}f(\bfx;\bft) =\la f(\bfx;\bft),
\end{equation}
where $\la\in\Cset$ and $f(\bfx;\bft)$ is the element in $\ker A^{(123)}_{-}$ in \eqref{5.1}.
The coefficient of $\theta_2$ shows that
\begin{equation}\label{5.3}
(1+2\la+4\nu_1-4\nu_2)g=(1+2\la+4\nu_{12})h+4uh_u-4uh_v.
\end{equation}
It is straightforward to check that if $g$ and $h$ are homogeneous polynomials in $u$ and $v$ of the same degree satisfying \eqref{Q12o}, then $1+2\la+4\nu_1-4\nu_2$ can be zero only when $\nu_1=\nu_2=0$. 
Thus for generic $\nu_1,\nu_2$ we can assume that $1+2\la+4\nu_1-4\nu_2\neq 0$ and therefore \eqref{5.3} determines $g$ uniquely from $h$ as follows
\begin{equation}\label{Q12_el_k}
g=\frac{(1+2\la+4\nu_{12})h+4uh_u-4uh_v}{1+2\la+4\nu_1-4\nu_2}.
\end{equation}
Substituting the last formula into \eqref{Q12o}, we see that \eqref{Q12o} holds if and only if $h(u,v)$ satisfies the equation
\begin{equation}\label{Q12_eq_h}
\big(u^2\pd_u^2-u^2\pd_u\pd_v+(2\nu_{12}+1)u\pd_u-(2\nu_1+1)u\pd_v+(\nu_{12}^2-(2\la+1)^2/16)\big)h=0.
\end{equation}
If $h(u,v)$ is homogeneous of degree $N$ we can write it as
\begin{equation}\label{}
h(u,v)=u^N\phi(v/u),
\end{equation}
and substituting the last equation in \eqref{Q12_eq_h}, we obtain the following equation for $\phi(z)$:
\begin{equation}\label{5.7}
\Big(z(1+z)\pd_z^2-((2\nu_{12}+2N-1)z+2\nu_1+N)\pd_z+(N+\nu_{12})^2-(2\la+1)^2/16\Big)\phi(z)=0.
\end{equation}
If the last equation has a solution which is a polynomial of degree $k\leq N$, then the coefficient of $z^k$ yields
$$(N-k+\nu_{12})^2=\frac{(2\la+1)^2}{16}.$$
This leads to 
\begin{equation}
\frac{2\la+1}{4}=\pm(N-k+\nu_{12}),\quad \text{ or equivalently }\quad\la=\pm2(N-k+\nu_{12})-\frac{1}{2}
\end{equation}
and for these values of $\la$, equation \eqref{5.7} reduces to 
\begin{equation}\label{5.9}
\Big(z(1+z)\pd_z^2-((2\nu_{12}+2N-1)z+2\nu_1+N)\pd_z+k(2N+2\nu_{12}-k)\Big)\phi(z)=0.
\end{equation}
Note that this equation is the hypergeometric differential equation with three regular singular points: $0$, $-1$ and $\infty$. For generic parameters $\nu_1,\nu_2$, up to a constant factor, 
it has a unique polynomial solution of degree $k$ given in terms of the Jacobi polynomial by 
\begin{equation}\label{5.10}
\phi(z)=P^{(-N-2\nu_1-1,-N-2\nu_2)}_k\left(1+2z\right).
\end{equation}
Let us recall that the Jacobi polynomials are expressed as follows in terms of the hypergeometric functions
\begin{equation}\label{JacobiH}
P^{(\alpha,\beta)}_k(x)=\frac{(\alpha+1)_k}{k!}{}_2F_1\left( \begin{matrix}-k,\, k+\alpha+\beta+1 \\ \alpha+1 \end{matrix} \,;\, \frac{1-x}{2} 
\right).
\end{equation}
Summarizing the above computations we obtain the following theorem describing the common eigenfunctions of the operators $Q^{(123)}$ and $Q^{(12)}$ which can be simultaneously diagonalized on $\ker (A^{(123)}_{-})\cap \Vseto$.
\begin{Theorem}
For $N\in\Nset_0$ and $k\in\{0,\dots,N\}$, let 
\begin{subequations}\label{5.13}
\begin{align}
h_{k,N}(u,v)&=u^N P^{(-N-2\nu_1-1,-N-2\nu_2)}_k\left(1+2\frac{v}{u}\right), \\
g_{k,N}^{+}(u,v)&=\frac{2N+2\nu_{12}-k}{N+2\nu_1-k}\ u^N P^{(-N-2\nu_1,-N-2\nu_2)}_k\left(1+2\frac{v}{u}\right),\label{5.13b} \\
g_{k,N}^{-}(u,v)&=- u^N P^{(-N-2\nu_1,-N-2\nu_2)}_{k-1}\left(1+2\frac{v}{u}\right).\label{5.13c}
\end{align}
\end{subequations}
Then 
$$\{f_{k,N}^{+}(\bfx;\bft),f_{k,N}^{-}(\bfx;\bft):N\in\Nset_0, 0\leq k\leq N \}$$ 
where
$$f_{k,N}^{\pm }(\bfx;\bft)=\cO_1(h_{k,N}(u,v))+\cO_2(g_{k,N}^{\pm}(u,v)),
$$
form a basis of $\ker (A^{(123)}_{-})\cap \Vseto$ and 
\begin{subequations}\label{5.14}
\begin{align}
Q^{(123)}\,f_{k,N}^{\pm }(\bfx;\bft)&=-(2N+2\nu_{123}+1/2)f_{k,N}^{\pm }(\bfx;\bft),\\
Q^{(12)}\,f_{k,N}^{\pm }(\bfx;\bft)&=(\pm2(N-k+\nu_{12})-1/2)f_{k,N}^{\pm }(\bfx;\bft).
\end{align}
\end{subequations}
\end{Theorem}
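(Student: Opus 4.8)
The plan is to read off the two eigenvalue identities in \eqref{5.14} from the reductions of Subsections~6.1 and~6.2, and then to obtain the basis claim from a dimension count in each homogeneous component. I would first dispose of the $Q^{(123)}$-eigenvalue. Each of $h_{k,N}$ and $g_{k,N}^{\pm}$ is $u^{N}$ times a polynomial in $v/u$, hence expands into monomials $u^{N-j}v^{j}$ and is homogeneous of degree $N$ in $(u,v)$. By the equivalence established for \eqref{Q123o}, any $\cO_1(h)+\cO_2(g)$ with $h,g$ homogeneous of the same degree $N$ is a $Q^{(123)}$-eigenvector with eigenvalue $-(2N+2\nu_{123}+1/2)$; applied to $f_{k,N}^{\pm}$ this gives the first line of \eqref{5.14}.

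For the $Q^{(12)}$-eigenvalue, note that $h_{k,N}=u^{N}\phi(v/u)$ with $\phi$ the polynomial solution \eqref{5.10} of the reduced hypergeometric equation, so $h_{k,N}$ solves \eqref{Q12_eq_h} for $\la=\la^{\pm}:=\pm2(N-k+\nu_{12})-1/2$. The derivation preceding \eqref{Q12_el_k} shows that $\cO_1(h_{k,N})+\cO_2(g)$ is a $\la^{\pm}$-eigenvector of $Q^{(12)}$ precisely when $g$ is given by \eqref{Q12_el_k}. Thus the entire content of the second line of \eqref{5.14} is the assertion that the closed forms $g_{k,N}^{+}$ in \eqref{5.13b} and $g_{k,N}^{-}$ in \eqref{5.13c} agree with \eqref{Q12_el_k} evaluated at $h=h_{k,N}$ and $\la=\la^{\pm}$. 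I expect this matching to be the only genuine computation.

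To carry it out, substitute $h=u^{N}\phi(z)$, $z=v/u$, so that $u(h_u-h_v)=u^{N}\big(N\phi-(1+z)\phi'\big)$, and evaluate the prefactors: $1+2\la^{+}=4(N-k+\nu_{12})$ and $1+2\la^{-}=-4(N-k+\nu_{12})$, whence the numerator coefficient $1+2\la+4\nu_{12}$ becomes $4(N-k+2\nu_{12})$ (resp. $-4(N-k)$) and the denominator $1+2\la+4\nu_1-4\nu_2$ becomes $4(N-k+2\nu_1)$ (resp. $-4(N-k+2\nu_2)$). After cancelling $u^{N}$ and writing $x=1+2z$, $\al=-N-2\nu_1-1$, $\be=-N-2\nu_2$ (so that $2N+2\nu_{12}=-\al-\be-1$ and $N+2\nu_2=-\be$), the required equalities reduce to the two standard Jacobi contiguity relations
\begin{align*}
(1+x)\tfrac{d}{dx}P^{(\al,\be)}_k(x)&=-(k+\al+\be+1)\big(P^{(\al,\be)}_k(x)-P^{(\al+1,\be)}_k(x)\big),\\
(1+x)\tfrac{d}{dx}P^{(\al,\be)}_k(x)&=k\,P^{(\al,\be)}_k(x)+(k+\be)\,P^{(\al+1,\be)}_{k-1}(x),
\end{align*}
the first yielding \eqref{5.13b} and the second \eqref{5.13c}; at $k=0$ the second degenerates correctly to $g_{0,N}^{-}=0$, consistent with $P^{(\al+1,\be)}_{-1}=0$.

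Finally, for the basis property I would invoke \prref{pr3.1}: the map $(h_1,h_2)\mapsto\cO_1(h_1)+\cO_2(h_2)$ is a linear bijection of $\Cset[u,v]^{2}$ onto $\ker(A^{(123)}_{-})\cap\Vseto$, injectivity being immediate since the coefficients of $\theta_1$ and $\theta_3$ in $\cO_1(h_1)+\cO_2(h_2)$ are $h_1$ and $-h_2$. Restricting to polynomials homogeneous of degree $N$ gives the $2(N+1)$-dimensional $Q^{(123)}$-eigenspace for $\mu=-(2N+2\nu_{123}+1/2)$. Inside it the $2(N+1)$ vectors $\{f_{k,N}^{\pm}:0\le k\le N\}$ are $Q^{(12)}$-eigenvectors with eigenvalues $\la^{\pm}_k=\pm2(N-k+\nu_{12})-1/2$, which for generic $\nu_1,\nu_2$ are pairwise distinct; hence these vectors are linearly independent and form a basis of the eigenspace, and summing over $N\in\Nset_0$ produces the claimed basis. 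The main obstacle is the Jacobi-polynomial bookkeeping of the third step; everything else is linear algebra already prepared in the preceding subsections.
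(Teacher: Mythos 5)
Your proposal is correct and follows essentially the same route as the paper: the eigenvalue equations are read off from the reductions preceding the theorem, $g_{k,N}^{\pm}$ is obtained by substituting $\la=\pm2(N-k+\nu_{12})-1/2$ into \eqref{Q12_el_k}, and the closed forms \eqref{5.13b}--\eqref{5.13c} follow from Jacobi contiguity identities equivalent to the one the paper invokes, namely $(-k+(t+1)\partial_t)P_k^{(\alpha,\beta)}(1+2t)=(k+\beta)P_{k-1}^{(\alpha+1,\beta)}(1+2t)$. Your final dimension-count and distinct-eigenvalue argument for the basis claim is a welcome addition that the paper leaves implicit.
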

 \proof The result for $h_{k,N}(u,v)$ follows directly from the discussion preceding the theorem.
 Using this result for $h_{k,N}(u,v)$ and the eigenvalues $\la=\pm2(N-k+\nu_{12})-1/2$ in \eqref{Q12_el_k}, one gets the following expressions
  for $g_{k,N}^{\pm}(u,v)$:
 \begin{subequations}\label{eq:gg}
\begin{align*}
g_{k,N}^{+}(u,v)&=\frac{(N-k+2\nu_{12})h_{k,N}(u,v)+u(\pd_uh_{k,N}(u,v)-\pd_vh_{k,N}(u,v))}{N-k+2\nu_{1}},\\
g_{k,N}^{-}(u,v)&=\frac{(N-k)h_{k,N}(u,v)-u(\pd_uh_{k,N}(u,v)-\pd_vh_{k,N}(u,v))}{N-k+2\nu_{2}}.
\end{align*}
\end{subequations}
Note that 
$$\text{ if }\quad h=u^N\phi(v/u),\quad \text{ then }\quad(u\pd_u-u\pd_v)h=u^N[(N-(t+1)\pd_t)\phi(t)]|_{t=v/u}.$$
Using the above formula and the identity 
%$$(a+(t+1)\pd_t)\, {}_2F_1\left( \begin{matrix}a,\, b\\ c \end{matrix} %\,;\, -t \right)=\frac{a(c-b)}{c}\, 
% {}_2F_1\left( \begin{matrix}a+1,\, b\\ c+1 \end{matrix} \,;\, -t %\right),$$
$$\left(-k+(t+1)\partial_t\right)P_k^{(\alpha,\beta)}(1+2t)
=(k+\beta)P_{k-1}^{(\alpha+1,\beta)}(1+2t)$$
we obtain the explicit formulas \eqref{5.13b}-\eqref{5.13c} for $g_{k,N}^{\pm}(u,v)$.
\endproof

Using the fact that the intermediate Casimir operators satisfy the Bannai--Ito algebra and knowing the spectrum of $Q^{(12)}$ and $Q^{(123)}$, one can deduce that the action of $Q^{(23)}$ is tridiagonal as follows
\begin{subequations}\label{eq:Q23o}
\begin{align}
 &Q^{(23)} f_{k,N}^{+}(\bfx;\bft)= \alpha^+_kf_{k,N}^{-}(\bfx;\bft) + \beta^+_{k} f_{k,N}^{+}(\bfx;\bft) +\gamma_k^+ f_{k+1,N}^{-}(\bfx;\bft),  \\
 &Q^{(23)} f_{k,N}^{-}(\bfx;\bft)= \alpha^-_{k-1}f_{k-1,N}^{+}(\bfx;\bft) + \beta^-_{k} f_{k,N}^{-}(\bfx;\bft) +\gamma_k^- f_{k,N}^{+}(\bfx;\bft).
\end{align}
\end{subequations}
The coefficients are also constrained by the algebra as follows:  
\begin{align}
 &\alpha^-_{k} \gamma_k^+ = \frac{4(k+1)(N-k)(2\nu_{12}+2N-k)(2\nu_{12}+N-k-1)}{(2\nu_{12}+2N-2k-1)^2},\label{ao}\\
&\alpha^+_{k} \gamma_k^- = \frac{(2\nu_1+N-k)(2\nu_2+N-k)(2\nu_3+k)(2\nu_{123}+2N-k)}{(\nu_{12}+N-k)^2},\label{bo}\\
&\beta^\pm_k= \pm \frac{(\nu_1-\nu_2)(\nu_{12}+2\nu_3+N)}{\nu_{12}+N-k} \mp \frac{(2\nu_{12}-1)(2\nu_{12}+2N+1)}{2(2\nu_{12}+2N-2k\mp 1)}.
\label{eq:betapm}
\end{align}
The exact expressions of these coefficients are given in the following corollary.
\begin{Corollary}
The coefficients in relations \eqref{eq:Q23o} are
\begin{eqnarray*}
&&\alpha_k^+=\frac{(2\nu_2+N-k)(2\nu_{123}+2N-k)}{\nu_{12}+N-k},\\
&&\alpha_k^-=\frac{2(2\nu_{1}+N-k+1)(2\nu_{12}+N-k)}{2\nu_{12}+2N-2k+1},\\
&&\gamma_k^+=\frac{2(N-k)(k+1)(2\nu_{12}+2N-k)}{(2\nu_1+N-k)(2\nu_{12}+2N-2k-1)},\\
&&\gamma_k^-=\frac{(2\nu_3+k)(2\nu_{1}+N-k)}{\nu_{12}+N-k}.
\end{eqnarray*}
\end{Corollary}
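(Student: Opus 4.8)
The plan is to read the four coefficients off directly from the $\cO_1$-components of the recurrence \eqref{eq:Q23o}, treating $\beta^{\pm}_k$ as already known from \eqref{eq:betapm}. The key observation is that $f^{+}_{k,N}$ and $f^{-}_{k,N}$ share the \emph{same} $\cO_1$-component $h_{k,N}$, so matching the $\cO_1$-parts of the two relations in \eqref{eq:Q23o} determines all four coefficients with no need to examine the $\cO_2$-parts. Using the action formulas for $Q^{(23)}\circ\cO_1$ and $Q^{(23)}\circ\cO_2$ recorded earlier, the $\cO_1$-component of $Q^{(23)}f^{+}_{k,N}$ is $(2v\pd_v+2\nu_{23}-1/2)h_{k,N}-2v\pd_u g^{+}_{k,N}$, which must equal the $\cO_1$-part $(\alpha^{+}_k+\beta^{+}_k)h_{k,N}+\gamma^{+}_k h_{k+1,N}$ of the right-hand side; likewise the $\cO_1$-component of $Q^{(23)}f^{-}_{k,N}$ equals $(2v\pd_v+2\nu_{23}-1/2)h_{k,N}-2v\pd_u g^{-}_{k,N}$ and must match $\alpha^{-}_{k-1}h_{k-1,N}+(\beta^{-}_k+\gamma^{-}_k)h_{k,N}$.

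First I would pass to a single variable. Writing $h_{k,N}=u^N\phi_k(z)$ with $z=v/u$ and $\phi_k(z)=P^{(\alpha,\beta)}_k(1+2z)$, $\alpha=-N-2\nu_1-1$, $\beta=-N-2\nu_2$, and recalling from \eqref{5.13b}--\eqref{5.13c} that $g^{+}_{k,N}=C^{+}_k u^N P^{(\alpha+1,\beta)}_k(1+2z)$ and $g^{-}_{k,N}=-u^N P^{(\alpha+1,\beta)}_{k-1}(1+2z)$, one uses the elementary rules $2v\pd_v(u^N\psi)=2u^N z\psi'(z)$ and $2v\pd_u(u^N\psi)=2u^N\bigl(Nz\psi(z)-z^2\psi'(z)\bigr)$, obtained exactly as the substitution $h=u^N\phi(v/u)$ in the proof of the theorem above. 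After this substitution each $\cO_1$-component becomes $u^N$ times an explicit combination of $P^{(\alpha,\beta)}_k(1+2z)$, a polynomial of the shifted family $P^{(\alpha+1,\beta)}_{k}(1+2z)$ (or $P^{(\alpha+1,\beta)}_{k-1}$) and their $z$-derivatives, of degree at most $k+1$ (respectively $k$) in $z$.

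The heart of the argument is then a Jacobi-polynomial computation: expand this combination in the single family $\{P^{(\alpha,\beta)}_j(1+2z)\}_j$. By the tridiagonal structure already established, only $j=k,k+1$ (respectively $j=k-1,k$) can occur, so it suffices to read off two coefficients in each case. The leading coefficients, comparing the top powers $z^{k+1}$ and $z^{k-1}$, come out without reference to $\beta^{\pm}_k$ and give $\gamma^{+}_k$ and $\alpha^{-}_{k-1}$ directly; the subleading coefficients give $\alpha^{+}_k+\beta^{+}_k$ and $\beta^{-}_k+\gamma^{-}_k$, from which $\alpha^{+}_k$ and $\gamma^{-}_k$ follow after subtracting the known $\beta^{\pm}_k$ of \eqref{eq:betapm}. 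To collapse the mixed-parameter terms I would use the standard toolbox for Jacobi polynomials: the derivative rule $\tfrac{d}{dx}P^{(\alpha,\beta)}_k=\tfrac{k+\alpha+\beta+1}{2}P^{(\alpha+1,\beta+1)}_{k-1}$, the three-term recurrence in the degree, the contiguous relations linking the $(\alpha,\beta)$ and $(\alpha+1,\beta)$ families, and the identity $(-k+(t+1)\pd_t)P^{(\alpha,\beta)}_k(1+2t)=(k+\beta)P^{(\alpha+1,\beta)}_{k-1}(1+2t)$ already invoked above.

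I expect the main obstacle to be precisely this bookkeeping: organising the parameter shifts and the degree recurrence so that everything collapses onto the two surviving basis polynomials with the correct rational prefactors, while tracking that the generic-$\nu$ hypothesis keeps denominators such as $\nu_{12}+N-k$ and $2\nu_{12}+2N-2k\pm1$ from vanishing. Once the four expressions are in hand, the products $\alpha^{-}_k\gamma^{+}_k$ and $\alpha^{+}_k\gamma^{-}_k$ dictated by the Bannai--Ito relations provide independent consistency checks: \eqref{bo} matches $\alpha^{+}_k\gamma^{-}_k$ at the same index, while \eqref{ao} matches only after accounting for the index shift inherent in the off-diagonal loop $f^{+}_{k,N}\to f^{-}_{k+1,N}\to f^{+}_{k,N}$, confirming the stated formulas.
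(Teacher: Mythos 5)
Your proposal is correct and follows essentially the same route as the paper: the authors likewise extract the coefficient of $\theta_1$ (equivalently, the $\cO_1$-component, since $\cO_2(g)$ contributes nothing to $\theta_1$) from both sides of \eqref{eq:Q23o}, obtain constraints on $\gamma_k^+$, $\alpha_{k-1}^-$ and on the sums $\alpha_k^++\beta_k^+$, $\beta_k^-+\gamma_k^-$, and then isolate $\alpha_k^+$ and $\gamma_k^-$ using the known expression \eqref{eq:betapm} for $\beta^\pm_k$. The Jacobi-polynomial bookkeeping you describe is exactly the computation left implicit in the paper's one-line proof.
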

\proof The coefficients of $\theta_1$ in the relations \eqref{eq:Q23o}, computed by using the explicit formulas for $f_{k,N}^{\pm}(\bfx;\bft)$, provide constraints between the coefficients $\alpha^\pm$, $\beta^\pm$ and $\gamma^\pm$. Combing these constraints with the expression \eqref{eq:betapm} of $\beta^\pm_k$, one gets the expressions in the corollary.
\endproof
The values of these coefficients are compatible with the relations \eqref{ao}-\eqref{bo}.

\section{Diagonalization of $Q^{(123)}$ and $Q^{(12)}$ in the even subspace}
\subsection{Spectral equations for $Q^{(123)}$}
Consider an even element 
\begin{equation}\label{6.1}
f(\bfx;\bft)=\cE_1(h(u,v))+\cE_2(g(u,v))\in \ker A^{(123)}_{-}.
\end{equation}
Then the spectral equation
\begin{equation}\label{6.2}
Q^{(123)}f(\bfx;\bft)=\mu f(\bfx;\bft)
\end{equation}
is equivalent to the equations
\begin{align*}
uh_u+vh_v&=(-\nu_{123}+\mu/2-3/4)h,\\
ug_u+vg_v&=(-\nu_{123}+\mu/2+1/4)g.
\end{align*}
The last two equations are satisfied if and only if $g$ is homogeneous of degree $N$ and $h$ is homogeneous of degree $N-1$, where
$$N=-\nu_{123}+\mu/2+1/4.$$ 
or equivalently
$$\mu=2N+2\nu_{123}-1/2.$$
\subsection{Spectral equations for $Q^{(12)}$}
Consider now the equation
\begin{equation}\label{6.3}
Q^{(12)}f(\bfx;\bft) =\la f(\bfx;\bft),
\end{equation}
where $\la\in\Cset$ and $f(\bfx;\bft)$ is the element in $\ker A^{(123)}_{-}$ in \eqref{6.1}.
The free term (i.e. the coefficient of $\theta_1^0\theta_2^0\theta_3^0$) shows that
\begin{equation}\label{6.4}
h(u,v)=\frac{(1+2 \la -4\nu_{12})}{4 u}g(u,v)- \pd_u g(u,v).
\end{equation}
Substituting the last formula into \eqref{6.3}, we see that \eqref{6.3} holds if and only if $g(u,v)$ satisfies the equation
\begin{equation}\label{6.5}
\big(u^2\pd_u^2-u^2\pd_u\pd_v+2\nu_{12}u\pd_u-2\nu_1u\pd_v+\nu_{12}(\nu_{12}-1)+1/4-(2\la-1)^2/16\big)g=0.
\end{equation}
If $g(u,v)$ is homogeneous of degree $N$ we can write it as
\begin{equation}\label{}
g(u,v)=u^N\phi(v/u),
\end{equation}
and substituting the last formula in \eqref{6.5}, we obtain the following equation for $\phi(z)$:
\begin{align}
&\Big(z(1+z)\pd_z^2-(2(\nu_{12}+N-1)z+2\nu_1+N-1)\pd_z \nonumber\\
&\qquad\qquad\qquad+(N+\nu_{12}-1/2)^2-(2\la-1)^2/16\Big)\phi(z)=0.\label{6.7}
\end{align}
If the last equation has a solution $\phi(z)$ which is a polynomial of degree $k\leq N$, then the coefficient of $z^k$ yields
$$(N-k+\nu_{12}-1/2)^2=\frac{(2\la-1)^2}{16}.$$
Equivalently
\begin{equation}
\la=\pm2(N-k+\nu_{12}-1/2)+1/2
\end{equation}
and for these values of $\la$ equation \eqref{6.7} reduces to 
\begin{align}
&\Big(z(1+z)\pd_z^2-(2(\nu_{12}+N-1)z+2\nu_1+N-1)\pd_z \nonumber\\
&\qquad\qquad\qquad +k(2N+2\nu_{12}-k-1)\Big)\phi(z)=0.\label{6.9}
\end{align}

For generic parameters $\nu_1,\nu_2$ and up to a constant factor, this equation has a unique polynomial solution of degree $k$ given by 
\begin{equation}\label{6.10}
\phi(z)= P_k^{(-N-2\nu_1,-N-2\nu_2)}(1+2z),
\end{equation}
where $P_k$ is the Jacobi polynomial \eqref{JacobiH}.
The next theorem describes the common eigenfunctions of the commuting operators $Q^{(123)}$ and $Q^{(12)}$ which can be simultaneously diagonalized on $\ker (A^{(123)}_{-})\cap \Vsete$.

\begin{Theorem}
For $N\in\Nset_0$ and $k\in\{0,\dots,N\}$, let 
\begin{subequations}\label{6.12}
\begin{align}
g_{k,N}(u,v)&=u^N\, P_k^{(-N-2\nu_1,-N-2\nu_2)}\left(1+2\frac{v}{u}\right), \\
h_{k,N}^{+}(u,v)&=u^{N-1}(N+2\nu_1-k) 
P_{k-1}^{(-N-2\nu_1,-N-2\nu_2+1)}\left(1+2\frac{v}{u}\right),\\
h_{k,N}^{-}(u,v)&=(k+1-2N-2\nu_{12}) \,u^{N-1}
P_k^{(-N-2\nu_1,-N-2\nu_2+1)}\left(1+2\frac{v}{u}\right).
\end{align}
\end{subequations}
Then 
$$\{f_{k,N}^{+}(\bfx;\bft):N\in\Nset_0,\; 0\leq k\;\leq N \}\cup 
\{f_{k,N}^{-}(\bfx;\bft):N\in\Nset_0,\; 0\leq k< N \}$$ 
where
$$
f_{k,N}^{\pm }(\bfx;\bft)=\cE_1(h_{k,N}^{\pm}(u,v))+\cE_2(g_{k,N}(u,v)),
$$
form a basis of $\ker (A^{(123)}_{-})\cap \Vsete$ and 
\begin{subequations}\label{6.14}
\begin{align}
Q^{(123)}\,f_{k,N}^{\pm }(\bfx;\bft)&=(2N+2\nu_{123}-1/2)f_{k,N}^{\pm }(\bfx;\bft),\\
Q^{(12)}\,f_{k,N}^{\pm }(\bfx;\bft)&=(\pm2(N-k+\nu_{12}-1/2)+1/2)f_{k,N}^{\pm }(\bfx;\bft). \label{eq:la2}
\end{align}
\end{subequations}
\end{Theorem}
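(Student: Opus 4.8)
The plan is to follow the structure of the proof given in the odd subspace. First, I would observe that the formula for $g_{k,N}(u,v)$ in \eqref{6.12} is immediate from the discussion preceding the theorem: it is precisely $u^N\phi(v/u)$ with $\phi$ the Jacobi polynomial \eqref{6.10}, the unique (up to scale) degree-$k$ polynomial solution of \eqref{6.9}, valid for the two admissible eigenvalues $\la^{\pm}:=\pm2(N-k+\nu_{12}-1/2)+1/2$. The eigenvalue relations \eqref{6.14} then follow at once: the $Q^{(123)}$-eigenvalue $\mu=2N+2\nu_{123}-1/2$ comes from the homogeneity degree $N$ via \eqref{6.2}, and the two values $\la^{\pm}$ give the $Q^{(12)}$-eigenvalues in \eqref{eq:la2}.

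Next I would recover $h_{k,N}^{\pm}$ from the constraint \eqref{6.4} evaluated at $\la=\la^{\pm}$. Writing $g=u^N\phi(z)$ with $z=v/u$ and using $\pd_u g=u^{N-1}[(N-z\pd_z)\phi]|_{z=v/u}$, the constraint \eqref{6.4} becomes
$$
h^{\pm}=u^{N-1}\big[(c^{\pm}-N)\phi+z\phi'\big],\qquad c^{\pm}=\tfrac14\big(1+2\la^{\pm}-4\nu_{12}\big),
$$
and a short computation gives $c^{+}=N-k$ and $c^{-}=k+1-N-2\nu_{12}$, so that $h^{+}=u^{N-1}(z\phi'-k\phi)$ and $h^{-}=u^{N-1}\big[(k+1-2N-2\nu_{12})\phi+z\phi'\big]$. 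To put these into the stated single-Jacobi form I would invoke two standard structure relations: with $x=1+2z$ and $P=P_k^{(\alpha,\beta)}$, $\alpha=-N-2\nu_1$, $\beta=-N-2\nu_2$, one has $z\phi'=(x-1)P'(x)$ together with
$$
(x-1)P'(x)-kP(x)=-(\alpha+k)\,P_{k-1}^{(\alpha,\beta+1)}(x),
$$
$$
(x-1)P'(x)+(k+1+\alpha+\beta)P(x)=(k+1+\alpha+\beta)\,P_{k}^{(\alpha,\beta+1)}(x).
$$
Since $N+2\nu_1-k=-(\alpha+k)$ and $k+1-2N-2\nu_{12}=k+1+\alpha+\beta$, these reproduce exactly the formulas for $h_{k,N}^{+}$ and $h_{k,N}^{-}$ in \eqref{6.12}; they are the even-case analogues of the contiguous identity used in the odd subspace.

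For the basis claim I would argue by dimension inside each $Q^{(123)}$-eigenspace. By \prref{pr3.1} every even solution is $\cE_1(h)+\cE_2(g)$, and the analysis of \eqref{6.2} shows that the eigenspace for $\mu=2N+2\nu_{123}-1/2$ consists of such elements with $h$ homogeneous of degree $N-1$ and $g$ homogeneous of degree $N$, a space of dimension $(N+1)+N=2N+1$. The $f_{k,N}^{\pm}$ lie in this eigenspace and, for generic $\nu_1,\nu_2$, carry pairwise distinct $Q^{(12)}$-eigenvalues $\la^{\pm}$, hence are linearly independent; counting $N+1$ vectors of type $+$ and $N$ of type $-$ gives exactly $2N+1$, so they form a basis of the eigenspace, and summing over $N\in\Nset_0$ yields a basis of $\ker(A^{(123)}_{-})\cap\Vsete$.

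The hard part — and the genuine departure from the odd case — will be justifying the ranges, namely why $f_{k,N}^{+}$ is admissible for $0\le k\le N$ but $f_{k,N}^{-}$ only for $0\le k<N$. This is a polynomiality constraint: each $h^{\pm}$ must equal $u^{N-1}$ times a polynomial in $z$ of degree at most $N-1$. In $h^{+}$ the leading $z^k$-terms of $z\phi'$ and $-k\phi$ cancel, dropping the degree to at most $k-1$, so $k=N$ stays admissible; in $h^{-}$ they do not cancel, the bracket has degree exactly $k$, and $k=N$ would produce the non-polynomial term $v^N/u$. This degree bookkeeping is precisely what makes the counts $N+1$ and $N$ add up to the eigenspace dimension $2N+1$, so it is essential both for the explicit formulas and for the basis statement.
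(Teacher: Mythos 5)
Your proposal is correct and follows essentially the same route as the paper: the eigenvalue relations and $g_{k,N}$ come from the preceding discussion, $h_{k,N}^{\pm}$ are recovered from the constraint \eqref{6.4} at $\la=\la^{\pm}$, and the stated single-Jacobi forms follow from standard contiguous relations (the paper quotes only $(k-t\pd_t)P_k^{(\al,\be)}(1+2t)=(\al+k)P_{k-1}^{(\al,\be+1)}(1+2t)$, which is equivalent to your pair of identities after one further $\be$-shift relation). The dimension count for the basis claim and the degree bookkeeping explaining why $f^{-}_{k,N}$ is restricted to $k<N$ are left implicit in the paper, and your explicit treatment of both is accurate.
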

\proof  The result for $g_{k,N}(u,v)$ follows directly from the discussion preceding the theorem.
 Using the formula for $g_{k,N}(u,v)$ and the eigenvalues $\la=\pm2(N-k+\nu_{12}-1/2)+1/2$ in \eqref{6.4}, one gets the following expressions
  for $h_{k,N}^{\pm}(u,v)$:
  \begin{align}
h_{k,N}^{+}(u,v)&=\frac{N-k}{u}g_{k,N}(u,v)- \pd_u g_{k,N}(u,v),\\
h_{k,N}^{-}(u,v)&=-\frac{N-k+2\nu_{12}-1}{u}g_{k,N}(u,v)- \pd_u g_{k,N}(u,v).
\end{align}
Note that 
$$\text{ if }\quad g=u^N\phi(v/u),\quad \text{ then }\quad u\pd_u g=u^N[(N-t\pd_t)\phi(t)]|_{t=v/u}.$$
Using the above formula and the identity 
$$(k-t \pd_t)P_k^{(\alpha,\beta)}(1+2t) =(\alpha+k)P_{k-1}^{(\alpha,\beta+1)}(1+2t)$$
we can write explicit formulas for $h_{k,N}^{\pm}(u,v)$ in terms of the Jacobi polynomials as stated in the theorem.
\endproof

Similarly to the arguments in the odd space, combining the fact that the intermediate Casimir operators satisfy the Bannai--Ito algebra with the spectrum of $Q^{(12)}$ and $Q^{(123)}$, one can deduce that the action of $Q^{(23)}$ is tridiagonal as follows
\begin{subequations}\label{eq:Q23e}
\begin{align}
 &Q^{(23)} f_{k,N}^{+}(\bfx;\bft)= \alpha^+_{k} f_{k-1,N}^{-}(\bfx;\bft) + \beta^+_{k} f_{k,N}^{+}(\bfx;\bft) +\gamma_k^+ f_{k,N}^{-}(\bfx;\bft),  \\
 &Q^{(23)} f_{k,N}^{-}(\bfx;\bft)= \alpha^-_kf_{k,N}^{+}(\bfx;\bft) + \beta^-_{k+1} f_{k,N}^{-}(\bfx;\bft) +\gamma_{k+1}^- f_{k+1,N}^{+}(\bfx;\bft).
\end{align}
\end{subequations}
The coefficients are also constrained by the algebra as follows:  
\begin{align}
 &\alpha^+_{k} \gamma_k^+ =\frac{k(2\nu_1+N-k)(2\nu_2+N-k)(2\nu_{12}+2N-k)}{(\nu_{12}+N-k)^2},\label{ae}\\
&\alpha^-_{k} \gamma_{k}^+ = \frac{4(2\nu_3+k)(N-k)(2\nu_{12}+N-k-1)(2\nu_{123}+2N-k-1)}{(2\nu_{12}+2N-2k-1)^2},\label{be}\\
&\beta^\pm_k= \mp \frac{(\nu_1-\nu_2)(\nu_{12}+N)}{\nu_{12}+N-k} \pm \frac{(2\nu_{12}-1)(2\nu_{12}+4\nu_3+2N+1)}{2(2\nu_{12}+2N-2k \mp1)}.
\label{eq:betapme}
\end{align}
The exact expressions of these coefficients are given in the following corollary.
\begin{Corollary}
The coefficients in relations \eqref{eq:Q23e} are
\begin{eqnarray*}
&&\alpha_k^+=\frac{(2\nu_1+N-k)(2\nu_2+N-k)}{\nu_{12}+N-k},\\
&&\alpha_k^-= \frac{2(2\nu_{12}+N-k-1)(2\nu_{123}+2N-k-1)}{2\nu_{12}+2N-2k-1},\\
&&\gamma_k^+= \frac{2(2\nu_3+k)(N-k)}{2\nu_{12}+2N-2k-1},\\
&&\gamma_k^-=\frac{k(2\nu_{12}+2N-k)}{\nu_{12}+N-k}.
\end{eqnarray*}
\end{Corollary}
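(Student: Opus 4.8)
The plan is to mirror the argument that establishes the odd-space Corollary, transporting it to the even sector. The structural skeleton is already in place: the tridiagonal form \eqref{eq:Q23e} is forced by the Bannai--Ito relations together with the known spectra of $Q^{(12)}$ and $Q^{(123)}$, and the same input fixes the diagonal entries $\beta^\pm_k$ through \eqref{eq:betapme} and the off-diagonal products $\alpha^+_k\gamma^+_k$, $\alpha^-_k\gamma^+_k$ through \eqref{ae}--\eqref{be}. What remains is to split each product into its two factors, i.e.\ to fix the normalisation-dependent individual coefficients $\alpha^\pm_k$ and $\gamma^\pm_k$. For this it suffices to carry out one genuine computation with the explicit eigenfunctions \eqref{6.12}.

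First I would apply $Q^{(23)}$ directly to $f^{\pm}_{k,N}=\cE_1(h^{\pm}_{k,N})+\cE_2(g_{k,N})$ using the action of $Q^{(23)}$ on $\cE_1$ and $\cE_2$ computed earlier, rewriting the outcome again in the form $\cE_1(\tilde h)+\cE_2(\tilde g)$; this decomposition is unique because the free (Grassmann-degree-zero) term of $\cE_1(h)+\cE_2(g)$ equals $g$, while the coefficient of $\theta_1\theta_3$ equals $-h$. I would then read off the free term, which isolates the shared polynomial $g_{k,N}$, exactly as the coefficient of $\theta_1$ isolates the shared $h_{k,N}$ in the odd case. Concretely, the free term of $Q^{(23)}f^{\pm}_{k,N}$ is $2v\,h^{\pm}_{k,N}+(2v\pd_v+2\nu_{23}-1/2)g_{k,N}$, whereas on the right-hand side of \eqref{eq:Q23e} the free terms are combinations of $g_{k-1,N}$, $g_{k,N}$, $g_{k+1,N}$ weighted by $\alpha^\pm_k$, $\beta^\pm_k$, $\gamma^\pm_k$.

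Next I would pass to the single variable $z=v/u$ via $g_{k,N}=u^{N}\phi(z)$ with $\phi(z)=P^{(-N-2\nu_1,-N-2\nu_2)}_k(1+2z)$, turning each free-term identity into a polynomial identity in $z$ between Jacobi polynomials; the $h^{\pm}_{k,N}$ contributions are brought to the same form by the same Jacobi contiguity relations used to prove the preceding theorem, which relate the shifted-parameter polynomials $P^{(\cdot,\cdot+1)}$ entering $h^{\pm}_{k,N}$ to the polynomials $P^{(\cdot,\cdot)}$ entering $g_{k,N}$. Matching the two highest powers of $z$ in each of the two relations, and inserting the known values of $\beta^\pm_k$ from \eqref{eq:betapme}, gives linear equations that determine $\gamma^+_k$ and $\alpha^+_k$ from the first relation and $\gamma^-_{k+1}$ and $\alpha^-_k$ from the second. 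As a final consistency check I would verify that the four resulting expressions reproduce the products \eqref{ae}--\eqref{be}.

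The main obstacle is the Jacobi-polynomial bookkeeping in the third step: $g_{k,N}$ carries the parameter $\beta=-N-2\nu_2$, whereas $h^{\pm}_{k,N}$ carry $\beta=-N-2\nu_2+1$ and differ from $g_{k,N}$ in degree, so aligning both the leading and the first subleading coefficients in $z$ requires careful tracking of the normalisations $(\alpha+1)_k/k!$ in \eqref{JacobiH} and of the three-term recurrence. This is entirely mechanical but is precisely where sign and index errors are most likely; everything else is dictated by the algebraic skeleton already assembled.
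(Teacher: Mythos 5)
Your proposal is correct and follows essentially the same route as the paper: the paper's proof likewise extracts the coefficient without $\theta$'s (the free term) from relations \eqref{eq:Q23e} using the explicit eigenfunctions \eqref{6.12}, and combines the resulting constraints with formula \eqref{eq:betapme} for $\beta^\pm_k$ to pin down $\alpha^\pm_k$ and $\gamma^\pm_k$, checking compatibility with \eqref{ae}--\eqref{be} afterwards. Your additional observations (uniqueness of the $\cE_1,\cE_2$ decomposition, the reduction to a Jacobi-polynomial identity in $z=v/u$) just make explicit what the paper leaves implicit.
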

\proof The coefficients without $\theta$'s in the relations \eqref{eq:Q23e}, computed by using the explicit expressions for $f_{k,N}^{\pm}(\bfx;\bft)$, provide constraints between the coefficients $\alpha^\pm$, $\beta^\pm$ and $\gamma^\pm$. Combining these constraints with formula \eqref{eq:betapme} for $\beta^\pm_k$, one gets the expressions stated in the corollary.
\endproof
The values of these coefficients are compatible with the relations \eqref{ae}-\eqref{be}.

\section{Conclusion}

Summing up, we have provided a model for the Bannai--Ito algebra on a superspace with $\mathbb{C}^2$ as body and with soul generated by three anticommuting Grassmann variables. The even and odd basis vectors were found to each have two components and to be realized in terms of Jacobi polynomials. The tridiagonal action of one Bannai--Ito generator in the eigenbasis of the other was explicitly calculated. The formulas thus obtained reflect contiguous relations of the Jacobi polynomials. 

It is worth mentioning that the Bannai-Ito algebra is isomorphic to the degenerate double affine Hecke algebra (DAHA) \cite{genest2016non} which has thus been modelled here in superspace by the same token.

For the sake of completeness, let us mention that an embedding of the Bannai--Ito algebra into $\osp$ was presented in \cite{baseilhac2018embedding} where an analytic realization of $\osp$ in terms of Dunkl operators was used to connect to some $-1$ - polynomials. It should be stressed that the Bannai--Ito model that results from combining the realization of $\osp$ used throughout in this paper with this embedding is clearly not the one that is attained from the dimensional reduction performed here.

Finally, it would definitely be of interest to develop along the lines pursued here realizations on superspaces of the higher rank Bannai--Ito algebras that have been constructed from the intermediate Casimir operators arising in manifold tensor products of $\osp$ \cite{de2016z2n}.

\section*{Acknowledgments}
The authors wish to acknowledge the enlightening discussions they had with Wouter van de Vijver in the early stages of the project. NC and PI are grateful to the Centre de Recherches Math\'ematiques (CRM) for supporting visits during which the research reported here was carried out.  NC is funded by the international research project AAPT of the CNRS and the ANR Project AHA ANR-18-CE40-0001. HDB is supported by the EOS-FWO Research Project grant 30889451.
PI is supported in part by Simons Foundation Grant \#635462. LV is supported in part by a Discovery Grant from NSERC.

% THE BIBLIOGRAPHY IS AUTOMATICALLY GENERATED

\bibliographystyle{unsrt} 
%\bibliographystyle{unsrtinur} 
% formatting style : order of apparition numbering, number citations, 
% author initials, arXiv clickable URLs
\bibliography{ref_superspace.bib} 

\begin{thebibliography}{10}

\bibitem{miller1968lie}
Willard Miller.
\newblock {\em Lie theory and special functions}, volume~43.
\newblock Academic Press New York, 1968.

\bibitem{de2015bannai}
Hendrik De~Bie, Vincent~X Genest, Satoshi Tsujimoto, Luc Vinet, and Alexei
  Zhedanov.
\newblock {The Bannai-Ito algebra and some applications}.
\newblock In {\em Journal of Physics: Conference Series}, volume 597, page
  012001. IOP publishing, 2015.

\bibitem{bannai2021algebraic}
Eiichi Bannai, Etsuko Bannai, Tatsuro Ito, and Rie Tanaka.
\newblock Algebraic combinatorics.
\newblock In {\em Algebraic Combinatorics}. De Gruyter, 2021.

\bibitem{tsujimoto2012dunkl}
Satoshi Tsujimoto, Luc Vinet, and Alexei Zhedanov.
\newblock {Dunkl shift operators and Bannai--Ito polynomials}.
\newblock {\em Advances in Mathematics}, 229(4):2123--2158, 2012.

\bibitem{koekoek2010hypergeometric}
Roelof Koekoek, Peter~A Lesky, Ren{\'e}~F Swarttouw, Roelof Koekoek, Peter~A
  Lesky, and Ren{\'e}~F Swarttouw.
\newblock {\em Hypergeometric orthogonal polynomials}.
\newblock Springer, 2010.

\bibitem{genest2014racah}
Vincent~X Genest, Luc Vinet, and Alexei Zhedanov.
\newblock {The Racah algebra and superintegrable models}.
\newblock In {\em Journal of Physics: Conference Series}, volume 512, page
  012011. IOP Publishing, 2014.

\bibitem{genest2015embeddings}
Vincent~X Genest, Luc Vinet, and Alexei Zhedanov.
\newblock {Embeddings of the Racah algebra into the Bannai-Ito algebra}.
\newblock {\em SIGMA. Symmetry, Integrability and Geometry: Methods and
  Applications}, 11:050, 2015.

\bibitem{gao2013classification}
Suogang Gao, Yan Wang, and Bo~Hou.
\newblock {The classification of Leonard triples of Racah type}.
\newblock {\em Linear algebra and its applications}, 439(7):1834--1861, 2013.

\bibitem{perelomov2012generalized}
Askold Perelomov.
\newblock {\em Generalized coherent states and their applications}.
\newblock Springer Science \& Business Media, 2012.

\bibitem{genest2014superintegrability}
Vincent~X Genest, Luc Vinet, and Alexei Zhedanov.
\newblock {Superintegrability in two dimensions and the Racah--Wilson algebra}.
\newblock {\em Letters in Mathematical Physics}, 104:931--952, 2014.

\bibitem{genest2013equitable}
Vincent~X Genest, Luc Vinet, and Alexei Zhedanov.
\newblock {The equitable Racah algebra from three algebras}.
\newblock {\em Journal of Physics A: Mathematical and Theoretical},
  47(2):025203, 2013.

\bibitem{de2017higher}
Hendrik De~Bie, Vincent~X Genest, Wouter van~de Vijver, and Luc Vinet.
\newblock {A higher rank Racah algebra and the Laplace--Dunkl operator}.
\newblock {\em Journal of Physics A: Mathematical and Theoretical},
  51(2):025203, 2017.

\bibitem{de2019bargmann}
Hendrik De~Bie, Plamen Iliev, and Luc Vinet.
\newblock {Bargmann and Barut-Girardello models for the Racah algebra}.
\newblock {\em Journal of Mathematical Physics}, 60(1):011701, 2019.

\bibitem{genest2014bannai}
Vincent Genest, Luc Vinet, and Alexei Zhedanov.
\newblock {The Bannai-Ito polynomials as Racah coefficients of the
  $\mathfrak{sl}_{-1}(2)$ algebra}.
\newblock {\em Proceedings of the American Mathematical Society},
  142(5):1545--1560, 2014.

\bibitem{dewitt1992supermanifolds}
Bryce DeWitt.
\newblock {\em Supermanifolds}.
\newblock Cambridge University Press, 1992.

\bibitem{frappat1996dictionary}
L~Frappat, A~Sciarrino, and Paul Sorba.
\newblock {Dictionary on Lie superalgebras}.
\newblock {\em arXiv preprint hep-th/9607161}, 1996.

\bibitem{genest2016non}
Vincent Genest, Luc Vinet, and Alexei Zhedanov.
\newblock {The non-symmetric Wilson polynomials are the Bannai--Ito
  polynomials}.
\newblock {\em Proceedings of the American Mathematical Society},
  144(12):5217--5226, 2016.

\bibitem{baseilhac2018embedding}
Pascal Baseilhac, Vincent~X Genest, Luc Vinet, and Alexei Zhedanov.
\newblock {An embedding of the Bannai--Ito algebra in
  $\mathcal{U}(\mathfrak{osp}(1, 2)))$ and $-1$-polynomials}.
\newblock {\em Letters in Mathematical Physics}, 108:1623--1634, 2018.

\bibitem{de2016z2n}
Hendrik De~Bie, Vincent~X Genest, and Luc Vinet.
\newblock {The $\mathbb{Z}_2^n$ Dirac--Dunkl operator and a higher rank
  Bannai--Ito algebra}.
\newblock {\em Advances in Mathematics}, 303:390--414, 2016.

\end{thebibliography}

\end{document}